\documentclass[letterpaper, 10 pt, conference]{ieeeconf}  % Comment this line out if you need a4paper

\IEEEoverridecommandlockouts                              % This command is only needed if 
\usepackage{graphicx}
\usepackage{textcomp}
\usepackage{amsmath, amssymb}
\usepackage{nicefrac}       % compact symbols for 1/2, etc.
\usepackage{comment}
\usepackage{siunitx}
\usepackage{aligned-overset}
\usepackage{algorithm}
\usepackage[noend]{algpseudocode}
\usepackage{xcolor}
\usepackage{tikz}
\usepackage{cite}
\usepackage{pgfplots}
\usetikzlibrary{positioning}
\let\labelindent\relax
\usepackage[shortlabels]{enumitem}
\usepackage{pgfplots}
\usepackage{wrapfig}
\usetikzlibrary{arrows.meta}
\usetikzlibrary{external}
\usetikzlibrary{graphs}
\usepgfplotslibrary{groupplots}
\usepackage{float}
\usepackage{comment}
\usepackage{subcaption}
\usepackage{todonotes}

\newcommand{\safeopt}{\textsc{SafeOpt}}

\newcommand{\ie}{i\/.\/e\/.,\/~}
\newcommand{\eg}{e\/.\/g\/.,\/~}

\newtheorem{theorem}{\bf Theorem}

\newtheorem{assumption}{\bf Assumption}

\newtheorem{remark}{\bf Remark}
\newtheorem{example}{\bf Example}
\newcommand{\domain}{\mathcal A}

\renewcommand{\P}{\mathbb{P}}

\newcommand{\I}{\mathcal{I}}
\newcommand{\Ig}{\mathcal{I}_{\mathrm{g}}}

\newcommand{\fakepar}[1]{\vspace{1mm}\noindent\textbf{#1.}}
\DeclareMathOperator*{\argmax}{arg\,max\,}
\DeclareMathOperator*{\argmin}{arg\,min\,}

\definecolor{aaltoRed}{RGB}{239,51,64}%
\definecolor{aaltoBlue}{RGB}{0,94,184}%
\definecolor{aaltoGray}{RGB}{140,133,123}%
\definecolor{aaltoOrange}{RGB}{255, 141, 79}%

\DeclareMathOperator*{\R}{\mathbb{R}}

\newcommand{\jj}{{(j)}}
\newcommand{\te}{{t_\mathrm{e}}}
\newcommand{\Te}{{T_\mathrm{e}}}

\newcounter{tool}

\definecolor{magenta}{RGB}{255,0,255}
\definecolor{lightgreen}{RGB}{76, 200, 76}
\definecolor{deeporange}{RGB}{204, 112, 0}
\newcommand{\lline}{$\mathscr{L}.$}
\usepackage[most]{tcolorbox}
\usepackage{xcolor}
\usepackage{changepage}
\usepackage{mathrsfs}
\usepackage{lipsum} % for dummy text
\usepackage[framemethod=tikz]{mdframed}
\usepackage[colorlinks=true, linkcolor=blue, citecolor=blue, urlcolor=blue]{hyperref}

\title{\LARGE \bf
Safe learning-based control \\ via function-based uncertainty quantification}

\author{Abdullah Tokmak$^{1}$, Toni Karvonen$^{2}$, Thomas B.\ Schön$^{3}$, and Dominik Baumann$^{1}$% <-this % stops a space
\thanks{*This research was partially supported by a Tandem Industry Academia Seed funding from the Finnish Research Impact foundation, the Research Council of Finland projects 359183  and 368086, the ELLIS Institute Finland, and the Kjell och Märta Beijer foundation.
}% <-this % stops a space
\thanks{$^{1}$
Cyber-physical Systems Group, Aalto University, Espoo, Finland {\tt\small firstname.lastname@aalto.fi}}%
\thanks{$^{2}$
 School of Engineering Sciences,
        Lappeenranta--Lahti University of Technology LUT, Lappeenranta, Finland
        {\tt\small toni.karvonen@lut.fi}}%
\thanks{$^{3}$
Department of Information Technology,
        Uppsala University, Uppsala, Sweden
        {\tt\small thomas.schon@uu.se}}%
}

\usepackage{fancyhdr}
\begin{document}
\newmdenv[innerlinewidth=0.5pt, roundcorner=4pt,linecolor=aaltoBlue,innerleftmargin=6pt,
backgroundcolor=aaltoBlue!10,
innerrightmargin=6pt,innertopmargin=6pt,innerbottommargin=6pt]{mybox}

\maketitle
%\thispagestyle{firstpage}
%\thispagestyle{empty}
%\pagestyle{empty}

%%%%%%%%%%%%%%%%%%%%%%%%%%%%%%%%%%%%%%%%%%%%%%%%%%%%%%%%%%%%%%%%%%%%%%%%%%%%%%%%
\begin{abstract}
Uncertainty quantification is essential when deploying learning-based control methods in safety-critical systems.
This is commonly realized by constructing uncertainty tubes that enclose the unknown function of interest---\eg the reward and constraint functions, or the underlying dynamics model---with high probability. 
However, existing approaches for uncertainty quantification typically rely on restrictive assumptions on the unknown function, such as known bounds on functional norms or Lipschitz constants, and struggle with discontinuities.
In this paper, we model the unknown function as a random function from which independent and identically distributed realizations can be generated, and construct uncertainty tubes via the scenario approach that hold with high probability and rely solely on the sampled realizations.
We integrate these uncertainty tubes into a safe Bayesian optimization algorithm, which we then use to safely tune control parameters on a real Furuta pendulum.
\end{abstract}
% \begin{IEEEkeywords}
% Bayesian optimization; Gaussian processes; Distributed optimization; Networked multi-agent systems
% \end{IEEEkeywords}

% \input{Figures/structure.tex}

\section{Introduction}\label{sec:introduction}

\begin{figure}[h]
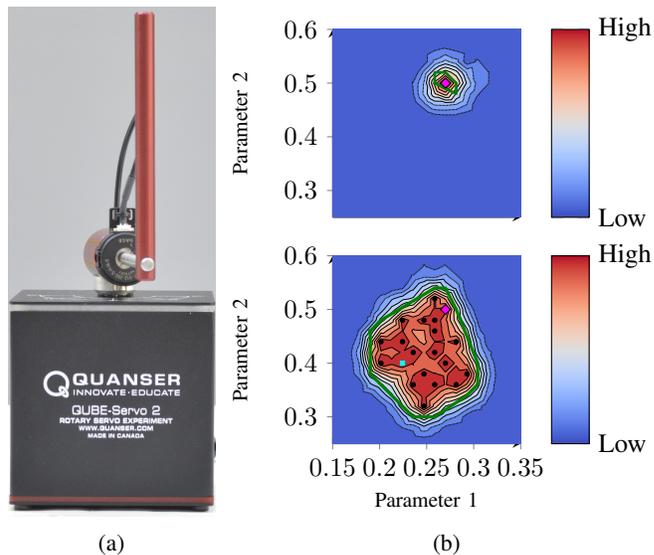

    \begin{subfigure}{2.75cm}
        \centering
        \includegraphics[width=2.75cm]{Figures/final/Furuta_CDC_2026_new.png}
        \caption{}
        \label{fig:1a}
    \end{subfigure}
\begin{subfigure}{6cm}
    \centering
    {
    \input{Figures/final/exploration_beginning}
    % \vspace*{-3mm}
    \input{Figures/final/exploration_end}}
    \caption{}
    \label{fig:1b}
\end{subfigure}
    \caption{\emph{Control parameter tuning using safe BO on a real Furuta pendulum.}
    Figure~\ref{fig:1a} shows the experimental setup of the Furuta pendulum~\cite{furuta1992swing} and Figure~\ref{fig:1b} illustrates the 
    exploration of the parameter space.
    We first (upper sub-figure) conduct an experiment with the initial safe parameter (magenta diamond).
    Based on the observed reward and constraints, we construct uncertainty tubes, where red and blue denote high and low reward estimates, respectively.
    Using these estimates, we compute a safe set (green hull) that contains only parameters we believe to be safe with high probability.
    We sequentially evaluate new parameters using our safe BO algorithm. 
    After 20 iterations (lower sub-figure), we have explored parts of the domain, expanded the safe set, and refined the uncertainty tubes.
    The cyan square marks the best parameter identified by the safe BO algorithm, which achieves significantly better control performance compared to the initial parameter, as detailed in Section~\ref{sec:furuta}.
    }
    \label{fig:Furuta_intro}
\end{figure}

% When mobile robots enter the real world, they operate in uncertain and highly complex environments.
% Moreover, the dynamics of modern robotic systems, such as soft robots~\cite{yasa2023overview}, are difficult to model accurately from first principles.
% Consequently, machine learning methods are of increasing interest to the control community to improve the performance of such systems, giving rise to the thriving field of learning-based control~\cite{brunke2022safe}.

Learning-based control~\cite{brunke2022safe} has gained significant attention in recent years, as it can achieve high-performing control policies for complex dynamical systems from experimental data, with reinforcement learning (RL)~\cite{sutton1998reinforcement} as a prominent example. % that aims to obtain high-performing control policies for complex systems from experimental data.
However, to tune control parameters of safety-critical real-world systems, we must~\emph{(i)} ensure sample efficiency and~\emph{(ii)} guarantee safe operation, \ie avoid damage to hardware or harm to human life; vanilla RL struggles with both requirements. 
In contrast, safe Bayesian optimization (BO)~\cite{sui2015safe} offers a suitable framework for sample-efficient control parameter tuning with safety guarantees and has already been successfully applied to complex systems like robots~\cite{berkenkamp2023bayesian}. 
The probabilistic safety guarantees of safe BO are obtained through explicit uncertainty quantification (UQ), typically by constructing high-probability uncertainty tubes around the unknown function of interest; \eg the reward and constraint functions associated with a given control parameter.

In this paper, we model the unknown function 
as a random function from which independent and identically distributed (i.i.d.) function realizations can be generated; 
we refer to this technique as \emph{function-based UQ}.
Not only does function-based UQ provide a general modeling framework that captures a wide range of formulations, but it also allows for the use of the so-called scenario approach~\cite{campi2018introduction, campi2018wait, garatti2022complexity}.
The scenario approach is a well-known data-driven decision-making tool in the control community, which we use to compute probabilistic uncertainty tubes by solely relying on sampling.
In this paper, we apply the uncertainty tubes to a safe BO algorithm.
Although demonstrated here within safe BO for control parameter tuning (see Figure~\ref{fig:Furuta_intro}), the proposed uncertainty tubes are broadly applicable to other safe learning-based control settings, including model predictive control~\cite{hewing2020learning} and formal verification~\cite{lindemann2024formal}.

\fakepar{Contributions}
We make the following contributions.
\begin{adjustwidth}{0.5cm}{}
\begin{enumerate}[label=(C\arabic*)]
    \item We introduce function-based UQ, a flexible and interpretable probabilistic modeling framework for unknown functions.
    \label{co:asm}
    \item Under~\ref{co:asm}, we derive probabilistic uncertainty tubes by using the scenario approach~\cite{campi2018introduction}  and present potentially tighter uncertainty tubes based on the wait-and-judge scenario framework~\cite{campi2018wait}. \label{co:bounds}
    \item We integrate~\ref{co:bounds} into a safe BO algorithm,\footnote{%
    Implementation available at: \url{http://github.com/tokmaka1/CDC-2026}
    } with which we tune the control parameters of a Furuta pendulum without ever incurring a safety violation. \label{co:exp}
\end{enumerate} 
\end{adjustwidth}

\fakepar{Related work}
\safeopt~\cite{berkenkamp2023bayesian, sui2015safe} is arguably the most popular safe BO algorithm.
Its uncertainty tubes are of frequentist, \ie worst-case nature, which is desirable when safety is paramount~\cite{chowdhury2017kernelized, fiedler2021practical, lahr2025optimal, fiedler2024safety, prajapat2025finite}.
However, these frequentist uncertainty tubes depend on a functional norm, specifically on a tight upper bound on the norm of the unknown function in its corresponding reproducing kernel Hilbert space (RKHS).
Obtaining such a tight upper bound on the RKHS norm of the unknown function is notoriously difficult in practice, which limits the practical applicability of these guarantees~\cite{fiedler2024safety, tokmak2024pacsbo, tokmak2025safe}.
In response to this shortcoming,~\cite{fiedler2024safety} proposes a safe BO algorithm that replaces the RKHS norm with a Lipschitz constant.
Although more interpretable, the Lipschitz constant is likewise typically unknown.
Approaches exist that upper-bound the RKHS norm~\cite{tokmak2024pacsbo, tokmak2025safe, tokmak2025automatic} or the Lipschitz constant~\cite{qin2023conformance, wood1996estimation, lederer2019uniform}. %  with statistical guarantees.
Nonetheless, these existing methods are either heuristic~\cite{tokmak2025automatic, wood1996estimation} or provide  statistical guarantees that are \emph{marginal} w.r.t.\ the unknown function, thereby departing from the fully agnostic frequentist setting~\cite{tokmak2024pacsbo, tokmak2025safe, qin2023conformance, lederer2019uniform}. 

The marginal setting is also prevalent in learning-based control~\cite{srinivas2012information, umlauft2018scenario, lederer2019uniform} and may offer greater modeling flexibility; albeit at the cost of worst-case guarantees.
The seminal paper~\cite[Lemma~5.1]{srinivas2012information} derives uncertainty tubes under the assumption that the unknown function is a sample from a Gaussian process (GP)~\cite{williams2006gaussian} defined on a finite domain.
However, these bounds are highly conservative, as they scale with the cardinality of the finite domain.
Moreover,~\cite[Theorem~3.1]{lederer2019uniform} derives tighter uncertainty tubes for functions that are samples of a GP. 
Those uncertainty tubes depend on the Lipschitz constant of the function, which the authors upper-bound probabilistically in~\cite[Theorem~3.2]{lederer2019uniform}.
Reference~\cite{umlauft2018scenario} also works under the GP assumption.
Crucially, the main theoretical result follows from sampling random trajectories that are i.i.d.\ to the unknown trajectory.
Then, the authors use the scenario approach to obtain a performance bound on the unknown function that depends on the performance of the i.i.d.\ sampled functions.

The idea in~\cite{umlauft2018scenario} is closely related to the function-based UQ setting that we formalize, where we assume that i.i.d.\ realizations of the unknown function can be generated.
Nevertheless, we do not restrict ourselves to the GP setting.
Instead, we consider functions that admit a finite linear expansion with fixed, user-chosen basis functions and random coefficients, from which i.i.d.\ realizations can be generated.
This includes flexible choices such as heavy-tailed coefficients or alternative bases like Haar wavelets~\cite{mallat1999wavelet} that introduce discontinuities.
In addition to the induced flexibility, function-based UQ facilitates the use of the classic scenario approach~\cite{campi2018introduction} and the wait-and-judge scenario approach~\cite{campi2018wait} to construct high-probability uncertainty tubes that are uniform over the domain and capture the behavior of the unknown function.

\fakepar{Notation}
The cardinality of a set~$\domain\subseteq\mathbb R^n$ is denoted by~$\lvert\domain\rvert.$
For scalars~$\xi_{i,k}\in\R$, with~$i\in I, k\in K$ and finite index sets such that~$\lvert I \rvert = N_I$ and $\lvert K \rvert = N_K$ we write~$[\xi]_{i\in I, k\in K}\coloneqq [\xi_{i_1, k_1}, \xi_{i_2, k_{1}}, \ldots \xi_{i_{N_I}, k_1}, \ldots, \xi_{i_{N_I}, k_{N_K}}]^\top\in\mathbb R^{N_I\cdot N_K}$.
Correspondingly, for any function~$h_i{:}\; \domain\to\R^n$, with~$\lvert\domain\rvert<\infty$ and~$i\in I, \lvert I\rvert =N_I$, we define~$[h_i(a)]_{i\in I, a\in\domain}\coloneqq[h_{1}(a_1),\ldots, h_{N_I}(a_1), \ldots, h_1(a_{\lvert \domain\rvert}),\ldots, h_{ N_I}(a_{\lvert \domain\rvert})]^\top \in \mathbb R^{\lvert\domain\rvert\cdot N_I}$.
For any $m\in\mathbb N$, we define $[m]\coloneqq \{1,\ldots,m\}$ and denote by~$\mathbf 1_m=[1,\ldots,1]\in\mathbb R^m$.

\section{Problem formulation}\label{sec:problem_setting}
Let us now make our problem formulation precise.
We assume that the control policy is parameterized such that, at any iteration~$t\geq 1$, we can conduct an experiment following the policy parameter~$a_t\in\domain\subseteq \mathbb R^n$, where~$\domain$ is a finite domain, potentially obtained by discretization, \ie$\lvert\domain\rvert=N<\infty$.
We aim to maximize a reward function $h_0\colon\domain\to\R$ while guaranteeing safety, \ie satisfying constraints for every conducted experiment.
This can be written as the constrained optimization problem
\begin{align}\label{eq:opt}
\max_{a\in\domain} h_0(a) \quad \text{subject to} \quad h_i(a_t)\geq \underline h_i, \forall t\geq 1, \forall i\in\Ig,
\end{align}
where $h_i\colon \domain\to \mathbb R$ for~$i\in\Ig\subseteq\mathbb N$ are the constraints and~$\underline h_i$ are the safety thresholds.
We also define~$\I\coloneqq \{0\}\cup\Ig$.
The reward function quantifies a policy parameter's control performance, while the constraint functions may map a policy parameter to the minimum distance to an obstacle over a trajectory.
At any iteration~$t\geq 1$, an experiment with policy parameter~$a_t\in\domain$ yields noisy observations 
$y_{i,t}\coloneqq h_i(a_t)+\epsilon_{i,t}$ for each $i \in \I$. %;
We collect the chosen parameters and the associated observations in the vectors $a_{1:t}\coloneqq [a_{t^\prime}]_{t^\prime\in[t]}$ and $y_{1:t}\coloneqq [y_{i,t^\prime}]_{i\in\I,t^\prime\in [t]}$, respectively.
The data set is denoted by~$\mathcal D_t\coloneqq (a_{1:t}, y_{1:t})$.
%we collect the dataset of evaluated policy parameters and corresponding noisy observations as~$D_t\coloneqq \{(a_{t^\prime}, y_{i,t^\prime})\}_{t^\prime=1.\ldots,t; i\in\I}$.
The stochasticity of the observation noise~$\epsilon_{i,t}$ is as follows.
\vspace{0.1cm}
\begin{assumption}[Observation noise]\label{asm:noise}
   For each~$i\in\I$ and for any~$t\geq 1$, the observation noise~$\epsilon_{i,t}$ is distributed according to a probability distribution~$\P_\epsilon$, from which we can generate i.i.d.\ samples. 
    \end{assumption}
    \vspace{0.1cm}
Assumption~\ref{asm:noise}  accommodates general noise models including heteroscedastic and heavy-tailed distributions with possibly unbounded support, provided that sampling from the noise distribution~$\P_\epsilon$ is possible. For a detailed discussion of its versatility and comparison with the classical sub-Gaussian noise assumption, we refer to~\cite{tokmak2026safe}.

To design an algorithm that solves~\eqref{eq:opt}, it is essential to characterize the behavior of the unknown functions~$h_i$ such that control parameters~$a_t$ can be selected that satisfy the prescribed safety constraints with high probability.
This motivates the construction of uncertainty tubes of the form\footnote{The probability measure~$\P$ in~\eqref{eq:tube} is a joint measure over the randomness in the noise~$\epsilon_{i,t}$ and the randomness in the unknown function~$h_i$ conditioned on the observed data; we make~$\P$ precise in Section~\ref{sec:posterior}. 
}
\begin{align}\label{eq:tube}
\mathbb P
\left[\exists i\in \I, a\in\domain\,{:}\, h_i(a) \not\in [\ell_{i,t}(a),u_{i,t}(a)]
\right] \leq \nu,
\end{align}
\ie we require that the unknown functions~$h_i$ are contained within the lower bounds~$\ell_{i,t}$ and upper bounds~$u_{i,t}$ with a user-chosen probability level~$1-\nu$, uniformly over the domain.
Establishing bounds as in~\eqref{eq:tube} requires a probabilistic modeling framework and standing assumptions, which we formalize in Section~\ref{sec:function-based-UQ}.
Then, in Section~\ref{sec:tube}, we solely focus on satisfying the uncertainty tube condition~\eqref{eq:tube}, before returning to Problem~\eqref{eq:opt} and tackling it via a safe BO algorithm in Section~\ref{sec:BO}.

\section{Function-based uncertainty quantification}\label{sec:function-based-UQ}

We first introduce a probabilistic model for the unknown function~$h_i$ (Section~\ref{sec:modeling}).
Then, we demonstrate how to generate function scenarios that capture the variability of the unknown function~$h_i$ conditioned on the observed data (Section~\ref{sec:posterior}).

\subsection{Probabilistic function modeling}\label{sec:modeling}
To construct uncertainty tubes of the form~\eqref{eq:tube}, it is necessary to impose assumptions on the unknown function~$h\coloneqq [h_i(a)]_{i\in\I,a\in\domain}$.
As mentioned in Section~\ref{sec:introduction}, a popular assumption of the
learning-based control community~\cite{umlauft2018scenario, lederer2019uniform, srinivas2012information} is to assume that~$h$ is a sample from a zero-mean GP with a 
% positive definite 
kernel~$k\colon \domain\times\domain\to\mathbb R$.
In the finite domain setting with~$\domain = \{a_1, \ldots, a_N\}$, a sample~$h_i$ of a zero-mean GP can be written as $h_i(a)=\sum_{r=1}^N c_{i,r} \varphi_{i,r}(a)$ with basis functions $\varphi_{i,r}(a)=k(a_r, a)$ and jointly Gaussian $c\coloneqq [c_{i,r}]_{i\in\I, r\in [N]}$~\cite{williams2006gaussian}.
Motivated by the Gaussian representation, we generalize this idea by allowing more general coefficient distributions and arbitrary fixed basis functions, thereby incorporating prior knowledge about the underlying function in a flexible, structured, and interpretable manner.
In this way, we can straightforwardly approximate diverse function classes, including discontinuous structures such as Haar wavelets (see Section~\ref{sec:admissible}).
Representing such structures is non-trivial with GPs, even when the underlying basis is known, since the corresponding kernels may be difficult to express in closed-form. % \refneeded.

% Nonsmppth kernels, discontinuities, a field on kernels that is significantly less studied than the differentiable ones/smooth ones (differentiable is the most important one)

% \vspace{0.1cm}
% \begin{assumption}\label{asm:prob}
%     For any~$i\in\I$, the unknown function~$h_i{:}\; \domain \subseteq \mathbb R^n \to \R$, where~$\lvert\domain\rvert=N$,
% admits the linear expansion
% \begin{align}\label{eq:polynomial}
%         h_i(a) = \sum_{r=1}^N \alpha_{i,r} \varphi_{i,r}(a).
% \end{align}
% The basis functions~$\varphi_{i,r}$ are fixed and~$\alpha$ is a random coefficient vector. 
% For any iteration~$t\geq 1$, the coefficient vector~$\alpha$ is sampled from an interpolation-consistent probability distribution~$\P$  (see Section~\ref{sec:posterior}).
% We can generate i.i.d.\ draws c
% of the coefficient vector~$\tilde\alpha^\jj\coloneqq [\tilde\alpha^\jj]_{i\in\I,r\in[N]}$ that induce a joint realization of scenarios~$\rho_{i,t}^\jj(a) \coloneqq \sum_{r=1}^N \tilde\alpha^\jj_{i,r} \varphi_{i,r}(a)$.
% \end{assumption}
% \vspace{0.1cm}

%\TK{
\vspace{0.1cm}
\begin{assumption}[Unknown function]\label{asm:prob}
    For each $i \in \I$, the unknown function $h_i$ is a random function of the form 
        $h_i(a) = \sum_{r=1}^N c_{i,r} \varphi_{i,r}(a)$,
where $\varphi_{i,r}$ are fixed basis functions and the random coefficient vector~$c$ has probability distribution~$\P_\mathrm{P}$.
Further, given~$\mathcal D_t$, $h \mid \mathcal D_t$ has distribution~$\P$. %\coloneqq [\alpha_{i,r}]_{i\in\I,r\in[N]}$ 
%has probability distribution~$\mathbb P_\mathrm{P}$, from which we can generate i.i.d.\ draws.
\end{assumption}
\vspace{0.1cm}
Assumption~\ref{asm:prob} allows us to recover the finite-domain GP representation by sampling jointly Gaussian coefficients and setting the basis functions to the kernel functions.
It also enables representing other stochastic processes such as Student's-$t$, Wiener, or Poisson processes through suitable choices of coefficient distributions and basis functions.

\subsection{Sampling data-consistent function scenarios}\label{sec:posterior}

Let $(\Omega, \mathcal F, \mathbb P_0)$ be a probability space supporting the randomness of the prior coefficients~$\mathbb P_\mathrm{P}$ and the observation noise~$\mathbb P_\epsilon$.
After uncovering~$\mathcal D_t$, the \emph{conditional} random variable~$h\mid \mathcal D_t$ becomes the object of interest, which is distributed according to a probability measure~$\P$ (see Assumption~\ref{asm:prob}). %\footnote{%
The probability measure~$\P$ denotes a joint probability measure comprising the randomness in the coefficients and the observation noise, conditioned on~$\mathcal D_t$.

% Let us now present our sampling procedure to generate scenarios that are i.i.d.\ with~$h \mid \mathcal D_t$, which thereby induces the conditional probability measure~$\P$.
Let us now make the measure~$\P$ precise and detail how to generate realizations from it.
To capture the behavior of the unknown function~$h$, we generate i.i.d.\ random functions~$\rho_t^\jj\coloneqq [\rho_{i,t}^\jj(a)]_{i\in\I, a\in\domain}$ with~$\rho_{i,t}^\jj(a) \coloneqq \sum_{r=1}^N \widetilde c^\jj_{i,r} \varphi_{i,r}(a)$;
we refer to the functions~$\rho_t^\jj$ as \emph{scenarios}.
From Assumption~\ref{asm:noise}, we have that~$h_i(a_t)=y_{i,t}-\epsilon_{i,t}$, where~$\epsilon_{i,t}$ is the (unknown) realization of the observation noise distributed according to~$\P_\epsilon$.
To ensure consistency with the data-generating model,
we enforce that the scenarios~$\rho_{i,t}^\jj$ satisfy~$\rho_{i,t}^\jj(a_t) = y_{i,t}-\widetilde\epsilon_{i,t}$, where~$\widetilde\epsilon_{i,t}^\jj$ is a realization~of the noise under Assumption~\ref{asm:noise}.
To this end, we first sample coefficients~$c_\mathrm{P}^\jj$ from the prior~$\P_\mathrm{P}$ (see Assumption~\ref{asm:prob}) before projecting them onto the interpolation constraint via a minimum norm update.
Specifically, at each~$t\geq1$ and for each scenario~$\rho_t^\jj$, the coefficients~$\widetilde c^\jj$ are given by
\begin{align}\label{eq:coefficients}
     \widetilde c^\jj = \widetilde c_\mathrm{P}^\jj + \Phi^\top (\Phi\Phi^\top)^{-1}(y_{1:t}-\widetilde\epsilon_{1:t^\jj}- \Phi\widetilde c_\mathrm{P}^\jj),
\end{align}
where~$\widetilde\epsilon_{1:t}^\jj\coloneqq [\widetilde\epsilon_{i,t^\prime}^\jj]_{i\in\I,t^\prime\in[t]}$.
Note that~\eqref{eq:coefficients} is the solution of the following affine problem
\begin{align}\label{eq:interpolation}
\min_{\widetilde c} \|\widetilde c^\jj-\widetilde c_{\mathrm P}^\jj\|_2^2 \quad \text{subject to} \quad \Phi \widetilde c^\jj = y_{1:t}-\widetilde\epsilon_{1:t}^\jj,
\end{align}
where~$\Phi\in\R^{t\times N}$ is a generalized Vandermonde matrix induced by the basis functions, \ie $\Phi[t^\prime,r]=\varphi_{r}(a_{t^\prime})$, $a_{t^\prime}\in a_{1:t}$.
Throughout this article, we assume that~$\Phi$ is full-rank, which implies that~$\Phi\Phi^\top$ is invertible.
In practice, to ensure numerical stability, it may be necessary to add a regularization constant to the matrix inversion in~\eqref{eq:coefficients}, resulting in the Tikhonov-regularized equivalent of~\eqref{eq:interpolation}; in place of the inversion, one may also use a Cholesky decomposition.
%
%
% The probability measure~$\mathbb P$, according to which we sample the scenarios and state our guarantees, is thus induced by sampling coefficients via~$\mathbb P_{\mathrm P}$, applying the minimum-norm projection~\eqref{eq:coefficients} with sampled noise realizations from~$\P_\epsilon$, and mapping them through the fixed basis functions.

The probability measure~$\P$, according to which we sample the scenarios and state our guarantees, is a pushforward measure~$\P=\P_0 \circ T^{-1}$, where~$T=T_2\circ T_1$ is a measurable transformation.
Specifically,~$T_1$ maps prior coefficients and noise realizations to coefficient vectors projected using the observed data~$\mathcal D_t$ via~\eqref{eq:coefficients}.
The transformation~$T_2$ then maps these coefficients via the basis expansion~$\varphi_{i,r}$ to obtain function realizations~$\rho_t^\jj$.
These functions are i.i.d.\ to~$h\mid \mathcal D_t$ under~$\P$ since they are obtained by applying the transformation~$T$ to independent samples under~$\P_0$~\cite[Theorem~2.1.10]{durrett2019probability}.
% of~$\P_0(\cdot \mid \mathcal D_t)$.

In this work, we adopt~$\P$ as outlined in this section, which yields a principled, data-consistent, and computationally efficient procedure for generating sufficiently variable scenarios.
Crucially,~$\P$ encodes a modeling assumption on~$h \mid \mathcal D_t$ under Assumption~\ref{asm:prob}, and the resulting uncertainty tubes~\eqref{eq:tube} and their associated guarantees depend on this choice of~$\P$.
Our modeling choice of~$\P$ yields meaningful, reliable, and practically efficient uncertainty tubes, as we will demonstrate both on synthetic examples (Section~\ref{sec:tube}) and real-world hardware (Section~\ref{sec:furuta}).

\section{Derivation of uncertainty tubes}\label{sec:tube}
In Section~\ref{sec:scenario_tube}, we establish the containment of~$h$ within uncertainty tubes, \ie we satisfy~\eqref{eq:tube} with high probability using the classical scenario approach~\cite{campi2018introduction}.
Section~\ref{sec:wait} derives potentially tighter bounds via the wait-and-judge framework~\cite{campi2018wait}.

\subsection{Uncertainty tubes via scenario programming}\label{sec:scenario_tube}
The availability of i.i.d.\ scenarios~$\rho_{t}^\jj$ advocates the use of the well-known scenario approach~\cite{campi2018introduction}.
In the spirit of the scenario approach, we formulate the computation of the tubes~$\ell_{t}\coloneqq[\ell_{i,t}(a)]_{i\in\I,a\in\domain}$ and~$u_{t}\coloneqq[u_{i,t}(a)]_{i\in\I,a\in\domain}$ to satisfy~\eqref{eq:tube} as a convex constrained optimization problem, also referred to as the \emph{scenario program}.
% Before stating the scenario program, let us first denote by~$\ell_t, u_t\in\mathbb R^{N\cdot \lvert \I\rvert}$ 
% the vectors collecting the values of~$\ell_{i,t}(a)$ and~$u_{i,t}(a)$ for all~$a\in\domain$ and all~$i\in\I$, \ie
% $
%     \ell_t\coloneqq [\ell_{t,0}(a_1),\ldots,\ell_{t,\lvert \Ig\rvert}(a_N)]^\top,
%  u_t\coloneqq [u_{t,0}(a_1), \ldots,u_{t,\lvert \Ig\rvert }(a_N)]^\top,
% $
% and define a collected scenario as~$\rho_t^\jj\coloneqq [
% \rho_{0,t}^\jj(a_1),\ldots,\rho_{\lvert\Ig\rvert,t}^\jj(a_N)
% ]^\top\in\mathbb R^{N\lvert\I\rvert}$.
At each iteration~$t\geq 1$, we compute~$m_t\in\mathbb N$ scenarios~$\rho_{i,t}^\jj$ as outlined in Section~\ref{sec:posterior} and solve
\begin{align}\label{eq:scenario_opt}
    \min_{[\ell_t^\top, u_t^\top]^\top\in\mathbb R^{2N\lvert\I\rvert}} %&\sum_{a\in\domain} \sum_{i\in\I} u_{i,t}(a) - \ell_{i,t}(a)  
    &\mathbf 1_{2N\lvert I\rvert}^\top (u_t-\ell_t)
    \\
    \text{s.t.} \; u_{i,t}(a) &\geq \rho_{i,t}^{(1)}(a), \quad \forall a\in\domain, i\in\I, %j\in[1,m_t] 
    \nonumber \\
    &\vdots \nonumber \\
    \ell_{i,t}(a) &\leq \rho_{i,t}^{(m_t)}(a), \quad \forall a\in\domain, i\in\I, %j\in[1,m_t]. 
    \nonumber
    \end{align}
\ie we compute the tightest possible tubes~$u_t, \ell_t$ that enclose all scenarios uniformly over the domain~$\domain$ and across all functions~$i\in\I$.
The solution of~\eqref{eq:scenario_opt} can be obtained in closed form and corresponds to the point-wise minimum and maximum of the sampled scenarios, \ie %for all~$i\in\I,$
\begin{align}\label{eq:ell_u}
    \ell_{t}(a) &\coloneqq \left[\min\nolimits_{j\in[1,m_t]}\rho_{0,t}^\jj(a), \ldots, \min\nolimits_{j\in[1,m_t]}\rho_{\lvert \Ig\rvert,t}^\jj(a)\right]^\top \\
    u_{t}(a) &\coloneqq\left[\max\nolimits_{j\in[1,m_t]}\rho_{0,t}^\jj(a), \ldots, \max\nolimits_{j\in[1,m_t]}\rho_{\lvert \Ig\rvert,t}^\jj(a)\right]^\top. \nonumber
\end{align}

The following theorem derives a \emph{sufficient} number of scenarios~$m_t$ such that the bounds in~\eqref{eq:ell_u} satisfy~\eqref{eq:tube} with progressively tightening confidence levels~$1-\kappa_t$ at each iteration~$t$, where~$\kappa_t\coloneqq \nicefrac{6\kappa}{\pi^2 t^2}$ for a user-defined~$\kappa\in(0,1)$.
The two levels of probability arise from the randomness in the sampled scenarios (outer layer) and the generalization of the resulting solution conditioned on the sampled scenarios (inner layer)~\cite[Section~1.2.5]{campi2018introduction}.
The outer layer is taken w.r.t.\ the product measure~$\P^{m_t}\coloneqq \otimes_{t=1}^{m_t}\P$ and the inner layer w.r.t.~$\P$.

\vspace{0.1cm}
\begin{theorem}[Satisfying~\eqref{eq:tube} via classic scenario theory]\label{th:classic_scenario}
    Let Assumptions~\ref{asm:noise} and~\ref{asm:prob} hold, select any  probability level~$\nu\in(0,1)$, any confidence level~$\kappa\in(0,1)$, and define~$\kappa_t\coloneqq \nicefrac{6\kappa}{\pi^2t^2}$.
    At any iteration~$t\geq 1$, generate~$m_t\in\mathbb N$ i.i.d.\ scenarios~$\rho_t^\jj$ such that~$m_t$ satisfies\footnote{%
    Note that~\eqref{eq:m_t_scaling} corresponds to the cumulative function of a beta distribution.
    Hence, solving for~$m_t$ is computationally inexpensive using standard numerical libraries and bisection.}
    \begin{align}\label{eq:m_t_scaling}
    \sum_{r=0}^{2N\lvert \I \rvert-1}{ m_t \lvert\I\rvert \choose r}
    \nu^r (1-\nu)^{ m_t\lvert\I\rvert-r}\leq \kappa_t,
    \end{align}
    and compute the uncertainty tubes as in~\eqref{eq:ell_u}.
    Then,~\eqref{eq:tube} holds at each~$t\geq 1$ with confidence at least~$1-\kappa_t$ under~$\P^{m_t}$, \ie the unknown functions~$h_i$ are uniformly contained in the uncertainty tubes with the prescribed accuracy for all~$i\in\I.$ %, t\geq 1$.
\end{theorem}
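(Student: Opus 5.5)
The plan is to cast~\eqref{eq:scenario_opt} as a convex scenario program and invoke the classical generalization result of~\cite{campi2018introduction}. The decision variable is $x\coloneqq[\ell_t^\top,u_t^\top]^\top\in\R^{d}$ with $d=2N\lvert\I\rvert$. The objective $\mathbf 1^\top(u_t-\ell_t)$ is linear, and each constraint is a half-space in $x$, so the program is convex.

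First I verify the remaining standing assumptions of the scenario theorem. The program is always feasible, and by~\eqref{eq:ell_u} its solution is unique and given in closed form. The scenarios $\rho_t^\jj$ are i.i.d.\ under $\P$, by the pushforward construction of Section~\ref{sec:posterior}. The key identification is that $h\mid\mathcal D_t$ has the same law $\P$ as each scenario. Hence the event in~\eqref{eq:tube}, namely $\exists i,a$ with $h_i(a)\notin[\ell_{i,t}(a),u_{i,t}(a)]$, is exactly the event that a fresh scenario $\rho\sim\P$ violates the constraint set generated by $x^*$. Therefore the left-hand side of~\eqref{eq:tube} equals the violation probability $V(x^*)$.

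Second, I decide how to count samples, which explains the $m_t\lvert\I\rvert$ in~\eqref{eq:m_t_scaling}. I would treat each pair $(j,i)$ as one ``sample''. That is, the uncertain parameter is the component vector $\rho_{i,t}^\jj(\cdot)\in\R^N$, which imposes $2N$ linear constraints on the block $(\ell_{i,t},u_{i,t})$. This gives $m_t\lvert\I\rvert$ constraint-generating samples in dimension $d$.

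Third, I apply the bound $\P^{m}[V(x^*)>\nu]\le\sum_{r=0}^{d-1}\binom{m}{r}\nu^r(1-\nu)^{m-r}$ with $m=m_t\lvert\I\rvert$ and $d=2N\lvert\I\rvert$. Together with~\eqref{eq:m_t_scaling}, this gives that~\eqref{eq:tube} holds with confidence at least $1-\kappa_t$. Summing $\kappa_t$ over $t$ gives $\kappa$, because $\sum_t 6/(\pi^2t^2)=1$. This yields a uniform-in-$t$ statement by a union bound if desired.

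The main obstacle is the second step. The component samples $\rho_{i,t}^\jj$ for different $i$ within the same scenario $j$ need not be independent, since the coefficients $c$ are jointly distributed across $i$. The violation event also concerns the \emph{joint} vector $h$, not individual components. My first attempt is to exploit decoupling: the program separates into $\lvert\I\rvert$ independent subprograms, each with $2N$ variables and $m_t$ samples, and the joint violation is a union of component violations. If the joint-sample count instead forces the bound with $m=m_t$ and $d=2N\lvert\I\rvert$, I would check that this is implied by~\eqref{eq:m_t_scaling}. Otherwise I would fall back on the sharper support-constraint argument via~\cite{campi2018wait}. This uses the fact that at most $2N\lvert\I\rvert$ scenarios (one per min and max entry) are support scenarios, and so recovers the same beta-tail bound.
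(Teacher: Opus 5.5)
Your overall route is the paper's. You treat \eqref{eq:scenario_opt} as a convex scenario program with $d=2N\lvert\I\rvert$, identify the left-hand side of \eqref{eq:tube} with the violation probability $V$ of a fresh draw from $\P$, and invoke \cite[Theorem~3.7]{campi2018introduction}. The gap is the step you yourself flag, and none of your repairs closes it.

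Counting each pair $(j,i)$ as a sample is not legitimate, for three reasons. The $\lvert\I\rvert$ components of one scenario are dependent through the jointly distributed $c$. Components with different $i$ have different laws and constrain different blocks of $x$. And the event to be controlled is the joint violation. Applied correctly, Theorem~3.7 takes the $m_t$ joint scenarios $\rho_t^\jj$ as the i.i.d.\ samples. Write $F(M)=\sum_{r=0}^{2N\lvert\I\rvert-1}\binom{M}{r}\nu^r(1-\nu)^{M-r}$. The theorem then gives $\P^{m_t}[V>\nu]\le F(m_t)$, whereas \eqref{eq:m_t_scaling} only imposes $F(m_t\lvert\I\rvert)\le\kappa_t$. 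The paper's proof cites exactly this theorem and never justifies the extra factor $\lvert\I\rvert$.

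Your fallbacks do not recover \eqref{eq:m_t_scaling}:
\begin{itemize}
\item $F$ is nonincreasing in $M$, so \eqref{eq:m_t_scaling} is implied by $F(m_t)\le\kappa_t$, not the converse.
\item Decoupling plus a union bound requires per-component bounds at level $\nu/\lvert\I\rvert$ and confidence $\kappa_t/\lvert\I\rvert$, each with $2N$ variables and $m_t$ samples. That forces $m_t\gtrsim 2N\lvert\I\rvert/\nu$, roughly $\lvert\I\rvert$ times what \eqref{eq:m_t_scaling} demands.
\item Wait-and-judge also counts the $m_t$ joint scenarios, so it reproduces the $F(m_t)$ bound, not the $F(m_t\lvert\I\rvert)$ one.
\end{itemize}

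In fact, for $\lvert\I\rvert\ge 2$ the statement with $m_t\lvert\I\rvert$ cannot be proved in general. Take $\varphi_{i,r}(a)=1$ if $a=a_r$ and $0$ otherwise, with i.i.d.\ Gaussian coefficients and Gaussian noise. Then the $K=N\lvert\I\rvert$ entries of a scenario are independent and continuous, so $1-V=\prod_{k=1}^{K}R_k$. Here $R_k$ is the $\P$-mass of entry $k$ between its sample minimum and maximum. The $R_k$ are independent with $1-R_k\sim\mathrm{Beta}(2,m_t-1)$, so $V$ concentrates around $1-\big(\tfrac{m_t-1}{m_t+1}\big)^{K}$.

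The smallest $m_t$ allowed by \eqref{eq:m_t_scaling} satisfies $m_t\nu/(2N)\to 1$ as $N\to\infty$. Hence $V\to 1-e^{-\lvert\I\rvert\nu}$ in probability. For $\lvert\I\rvert=2$ and $\nu=0.1$ this gives $V\approx 0.18>\nu$ with $\P^{m_t}$-probability tending to one, not at least $1-\kappa_t$.

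So your argument, like the paper's, establishes the theorem only for $\lvert\I\rvert=1$, or with $m_t\lvert\I\rvert$ replaced by $m_t$ in \eqref{eq:m_t_scaling}. Separately, your aside about a union bound over $t$ is not justified as stated. The measure $\P$ depends on $\mathcal D_t$ and so changes across iterations, which is the point of the paper's remark on per-iteration safety. The theorem does not require that union bound anyway.
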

\vspace{0.1cm}

\begin{proof}
The proof is based on the generalization theorem of the scenario approach~\cite[Theorem~3.7]{campi2018introduction}.
The constraints of the scenario program~\eqref{eq:scenario_opt} enforce at each iteration~$t\geq 1$ that~$\rho_{i,t}^\jj(a) \in [\ell_{i,t}(a), u_{i,t}(a)]$, uniformly for all~$a\in\domain$, all~$j\in [m_t]$, and all~$i\in\I$.
Since the unknown function and the scenarios are i.i.d.\ under~$\P$ (see Section~\ref{sec:posterior}) the solution of the scenario program~\eqref{eq:ell_u} generalizes to the unseen scenario~$h$. 
Hence, at each iteration~$t\geq 1$, we have that~$\P^{m_t}\big[\P[\exists i\in \I, \exists a\in\domain: h_i(a)\not\in [\ell_{i,t}(a),u_{i,t}(a)]]\leq \nu\big]\geq 1-\kappa_t$.
\end{proof}

Theorem~\ref{th:classic_scenario} provides a convenient one-shot procedure to determine a sufficient number of scenarios~$m_t$ to satisfy~\eqref{eq:tube} with high probability. 
However, the left-hand side of~\eqref{eq:m_t_scaling} scales with the discretization size~$N$, which may result in a prohibitively large number of scenarios~$m_t$ to satisfy small confidence parameters~$\kappa$.
The following example illustrates this effect and introduces the notion of \emph{support scenarios}, which can be exploited to reduce the required number of scenarios and thereby tighten the uncertainty tubes.

\begin{figure}
    \centering
\input{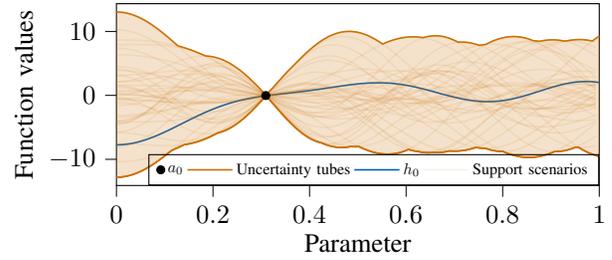}
    \caption{\emph{Uncertainty tubes established in Theorem~\ref{th:classic_scenario}}.
    The unknown function~$h_0$ (blue) is contained in the uncertainty tubes (orange). 
    The light-orange functions show the~$s_t\leq m_t$ \emph{support scenarios} that contribute to the solution of~\eqref{eq:scenario_opt}.
    }
    \label{fig:support}
\end{figure}

\vspace{0.1cm}
\begin{example}\label{ex:1}
Consider a single function (\ie $\lvert\I\rvert=1)$ at iteration~$t=1$, defined on the domain~$\domain=[0,1]$, which is discretized to~$N=1000$ equidistant points.
Further, let the probability level be~$\nu=10^{-1}$ and the confidence level be~$\kappa=10^{-3}$.
We generate scenarios by sampling the coefficient vector from a normal distribution, \ie $c_\mathrm{P}\sim\mathcal N(0,0.1)$, see~\eqref{eq:coefficients}.
The basis functions are chosen as trigonometric functions on the discretized domain~$\domain$.
In particular, we consider~$\varphi_{0,0}=1$ and~$\varphi_{0,r}=\cos\big(
0.05\pi r a
\big)$.
The observation noise (Assumption~\ref{asm:noise}) is assumed to be from a uniform distribution, \ie $\epsilon_t\sim \mathcal U(-\delta, \delta)$ with~$\delta=0.1$.
By solving~\eqref{eq:m_t_scaling}, we obtain that~$m_t=\num{21403}$ scenarios are required to compute sufficiently accurate uncertainty tubes.
As such, Theorem~\ref{th:classic_scenario} can be computationally demanding, as it requires generating a large number of function scenarios.
Moreover, a large~$m_t$ can lead to conservative bounds, as illustrated in Figure~\ref{fig:support}.
However, only~$s_t=40$ of the generated scenarios---termed as the \emph{support scenarios}---actually determine the resulting uncertainty tubes.
\end{example}

\subsection{Tighter uncertainty tubes via support scenarios}\label{sec:wait}

Following~\cite[Definition~2]{campi2018wait}, we define support scenarios as scenarios~$\rho_j^\jj$ whose removal changes the solution of the scenario program~\eqref{eq:scenario_opt}.
By Helly's lemma~\cite{helly1923mengen}, the number of support scenarios~$s_t$ of the convex scenario program~\eqref{eq:scenario_opt} is bounded by the cardinality of the decision variable~$[\ell_t^\top,u_t^\top]^\top\in\mathbb R^{2N\lvert\I\rvert}$, which appears on the left-hand side of~\eqref{eq:m_t_scaling}.
However, most problems adhere to a significantly smaller support size, as demonstrated in Example~\ref{ex:1}.
Although solving the scenario program~\eqref{eq:scenario_opt} and observing~$s_t$ support scenarios is not equivalent to working in the dimension~$s_t$, the gap between these two quantities is quantifiable and has given rise to the wait-and-judge scenario approach~\cite{campi2018wait}.
Following the wait-and-judge framework, we can first wait, \ie solve Problem~\eqref{eq:scenario_opt}, and then a posteriori judge, \ie count the number of support scenarios~$s_t$.

\begin{algorithm}
\begin{algorithmic}[1]
    \Require $\kappa$, $\nu$, $\mathcal D_t$, $\mathbb P_\mathrm{P}, \mathbb P_\epsilon$
\State \textbf{Init:} $m_t \gets \underline m$ \Comment{$\underline m=1$ or Remark~\ref{re:arithmetic}}
    \While{$m_t \leq \overline m_t$}\Comment{$\overline m_t$ is solution of~\eqref{eq:m_t_scaling}}
    \State Sample~$m_t$ scenarios~$\rho_t^\jj$  \Comment{Section~\ref{sec:posterior}}
    \State Compute uncertainty tubes $\ell_t, u_t$ \Comment{\eqref{eq:scenario_opt}, \eqref{eq:ell_u}}
    \State Count number of support scenarios~$s_t$ of~\eqref{eq:ell_u}
    \State Solve for~$\tau\in(0,1)$ such that
    \[
        \frac{\kappa_t}{m_t+1}\sum_{r=s_t}^{m_t} {r\choose s_t} \tau^{r-s_t} -  {m_t\choose s_t} 
    \tau^{m_t-s_t} = 0.
    \]
    \If{$\tau\geq 1- \nu$} \Comment{Tubes sufficiently accurate}
    \State \Return $\ell_t, u_t$
    \Else
    \State Increase~$m_t$ \Comment{$m_t \gets m_t+1$ or Remark~\ref{re:arithmetic}}
    \EndIf
    \EndWhile
\end{algorithmic}
    \caption{Wait-and-judge framework to satisfy~\eqref{eq:tube}}
    \label{alg:wait}
\end{algorithm}

Algorithm~\ref{alg:wait} summarizes the procedure to compute sufficiently accurate uncertainty tubes~$\ell_t$ and~$u_t$ based on the wait-and-judge framework.
In contrast to the classic scenario bounds obtained via Theorem~\ref{th:classic_scenario}, the wait-and-judge approach does not provide a one-shot procedure for computing the required number of scenarios~$m_t$.
Therefore, we first need to initialize the number of scenarios~$m_t$ (\lline~1).  
Then, we sample~$m_t$ scenarios~$\rho_t^\jj$ as detailed in Section~\ref{sec:posterior} (\lline~3) before formalizing the scenario program~\eqref{eq:scenario_opt}  and obtaining its solution~\eqref{eq:ell_u} (\lline~4).
Next, we count the number of support scenarios~$s_t$, \ie the number of scenarios contributing to the solution~\eqref{eq:ell_u} (\lline~5), which can be done efficiently by counting the arguments attaining the extrema in~\eqref{eq:ell_u}. %\footnote{%
%This can be done efficiently by identifying the scenarios attaining the point-wise extrema in~\eqref{eq:ell_u}, \ie by determining the arguments of~\eqref{eq:ell_u} and counting the corresponding unique indices, which requires only simple and standard arithmetic operations and is therefore computationally inexpensive.
%}
We obtain~$\tau\in (0,1)$ by solving a fixed point problem (\lline~6), \eg using bisection~\cite[Appendix~2]{campi2018wait}. %\footnote{%
%We can effectively solve for~$\tau$ using bisection, see~\cite[Appendix~2]{campi2018wait} for an exemplary implementation.}
% (\lline~6). %, whose exact structure will become clear in the proof of Theorem~\ref{th:uncertainty_tubes}.
If~$\tau\geq\nu$ (\lline~7-8), then we have obtained sufficiently accurate uncertainty tubes that satisfy~\eqref{eq:tube} with confidence at least~$1-\kappa_t$, where~$\kappa_t\coloneqq\nicefrac{6\kappa}{\pi^2 t^2}$, with user-chosen~$\kappa\in (0,1)$.
Otherwise (\lline~9--10), we increase the number of scenarios~$m_t$ and continue with the described procedure. 
Note that the maximum number of scenarios is bounded by~$\overline m_t$ obtained from Theorem~\ref{th:classic_scenario} (\lline~2), which corresponds to the classic scenario bound using the support dimension derived from Helly’s lemma and thus provides an a priori stopping criterion for the algorithm.

Next, we prove that Algorithm~\ref{alg:wait} indeed yields sufficiently accurate uncertainty tubes.

\vspace{0.1cm}
\begin{theorem}[Satisfying~\eqref{eq:tube} via wait-and-judge theory]\label{th:uncertainty_tubes}
    Let Assumptions~\ref{asm:noise} and~\ref{asm:prob} hold, select any  probability level~$\nu\in(0,1)$, any confidence level~$\kappa\in(0,1)$, and define~$\kappa_t\coloneqq \nicefrac{6\kappa}{\pi^2t^2}$.
    At any iteration~$t\geq 1$, compute~$\ell_t$ and~$u_t$ according to Algorithm~\ref{alg:wait}.
    Then,~\eqref{eq:tube} holds at each~$t\geq 1$ with confidence at least~$1-\kappa_t$ under~$\P^{m_t}$, \ie the unknown functions~$h_i$ are uniformly contained in the uncertainty tubes with the prescribed accuracy for all~$i\in\I$. % and all~$t\geq 1$.
\end{theorem}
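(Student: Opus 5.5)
The plan is to reduce the statement to the main result of the wait-and-judge scenario approach~\cite[Theorem~1]{campi2018wait}, in the same way Theorem~\ref{th:classic_scenario} reduces to~\cite[Theorem~3.7]{campi2018introduction}. First, I would check that the standing assumptions of~\cite{campi2018wait} hold for the scenario program~\eqref{eq:scenario_opt}:
\begin{itemize}
\item it is convex (linear objective, linear constraints) and feasible for every sample;
\item its solution~\eqref{eq:ell_u} is unique, being the pointwise minimum and maximum;
\item under the distribution~$\P$ of Section~\ref{sec:posterior}, the scenarios~$\rho_t^\jj$ and the unseen function~$h\mid\mathcal D_t$ are i.i.d.
\end{itemize}
The non-degeneracy assumption of~\cite{campi2018wait} (the solution with only the support scenarios equals the full solution with probability one) needs an argument. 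For the min/max program, removing all non-support scenarios leaves every pointwise extremum unchanged, so the solution is identical. Ties, where two scenarios attain the same extremum at some~$a$, occur with probability zero when~$\P$ has no atoms in the relevant coordinates. Otherwise I would break ties by a fixed rule (e.g.\ smallest index), which~\cite{campi2018wait} permits.

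Second, I would invoke the theorem itself. Let~$\beta_k$ for~$k=0,\ldots,m_t$ denote the polynomial-equation solutions~$t(k)$ from~\cite{campi2018wait}, computed with confidence~$\kappa_t$: $\beta_k$ is the unique~$\tau\in(0,1)$ solving the equation in \lline~6 with~$s_t=k$. The theorem then gives
\[
\P^{m_t}\big[V(\ell_t,u_t)>1-\beta_{s_t}\big]\leq\kappa_t ,
\]
where~$V(\ell_t,u_t)\coloneqq\P[\exists i\in\I,a\in\domain: h_i(a)\notin[\ell_{i,t}(a),u_{i,t}(a)]]$ is the violation probability. The event inside~$V$ is precisely a constraint of the form in~\eqref{eq:scenario_opt} being violated by the new scenario~$h$.

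Third, I would combine this with the acceptance test: \lline~7 returns only when~$\tau=\beta_{s_t}\geq1-\nu$, which gives~$V\leq1-\beta_{s_t}\leq\nu$ on the good event. (The prose writes~$\tau\geq\nu$, but the algorithm's condition~$\tau\geq1-\nu$ is the one this step needs.)

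The main obstacle is that Algorithm~\ref{alg:wait} is sequential. It tries~$m_t=\underline m,\underline m+1,\ldots$ and returns the first~$m_t$ that passes, so the final sample size is data-dependent and the per-$m$ guarantee does not transfer directly. I would first try the following. Condition on the~$\overline m_t-\underline m+1$ candidate runs, each drawing fresh scenarios. Bound the probability of the event ``the returned tube violates'' by a sum over candidates of~$\P^{m}[\text{pass at }m \text{ and }V>\nu]$. To control this sum, allocate the confidence across candidates, running the test at level~$\kappa_t/(\overline m_t-\underline m+1)$, or with a summable split as in Remark~\ref{re:arithmetic}. Alternatively, interpret the statement as holding for whichever~$m_t$ is returned, with the guarantee read conditionally on that~$m_t$.

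If the intended reading is the latter, as the phrasing ``under~$\P^{m_t}$'' suggests, the proof reduces to the second and third steps. I would present it that way and add a remark about the union bound. Termination at~$\overline m_t$ is consistent with this: there the classic bound of Theorem~\ref{th:classic_scenario} applies regardless of~$s_t$, since~$s_t\le2N\lvert\I\rvert$ by Helly's lemma.
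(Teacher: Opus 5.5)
Your proposal follows the paper's proof: apply the wait-and-judge generalization theorem to the min/max scenario program at the returned $m_t$, use the exit test $\tau\geq 1-\nu$ to pass from $1-\tau$ to $\nu$, and fall back on Theorem~\ref{th:classic_scenario} at $\overline m_t$. Your inequality direction is the correct one (the paper's proof writes $1-\tau\geq\nu$), and your checks of uniqueness and non-degeneracy are extra care the paper omits. The paper also treats the returned $m_t$ as if it were fixed in advance, so the sequential-stopping issue you raise is real and is not handled there either. However, reading the guarantee \emph{conditionally} on the returned $m_t$ is not justified: for each fixed $m$, the wait-and-judge bound only gives $\P^{m}[\text{pass at } m \text{ and } V>\nu]\leq\kappa_t$, so conditioning on passing inflates this by a factor $1/\P^{m}[\text{pass at } m]$, whereas your union-bound split of $\kappa_t$ over the candidate sample sizes of Algorithm~\ref{alg:wait} is a sound fix.
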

\vspace{0.1cm}
\begin{proof}
The proof is based on the generalization theorem of the wait-and-judge scenario approach~\cite[Theorem~2]{campi2018wait}.
Analogous to the proof of Theorem~\ref{th:classic_scenario}, the solution of the scenario program~\eqref{eq:ell_u} generalizes to~$h$.
Hence, due to~\cite[Theorem~2]{campi2018wait},  we have at each iteration~$t\geq 1$ that~$\P^{m_t}\big[\P[\exists i\in \I, \exists a\in\domain: h_i(a)\not\in [\ell_{i,t}(a),u_{i,t)(a)}]]\leq 1-\tau]\geq 1-\kappa_t$. %, with confidence at least~$1-\nicefrac{6\kappa}{\pi^2 t^2}$.
Since we exit the algorithm only if~$1-\tau  \geq \nu$ (\lline~7), we have $\P^{m_t}\big[\P[\exists i\in \I, \exists a\in\domain: h_i(a)\not\in [\ell_{i,t}(a),u_{i,t)(a)}]]\leq \nu]\geq 1-\kappa_t$, from which satisfaction of~\eqref{eq:tube} with confidence at least~$1-\kappa_t$ directly follows.
The algorithm terminates at the latest when the required number of scenarios~$m_t$ reaches that prescribed by Theorem~\ref{th:classic_scenario}, and hence we recover the guarantees of  Theorem~\ref{th:classic_scenario} in that case.
\end{proof}

Theorem~\ref{th:uncertainty_tubes} proves that computing uncertainty tubes by following the wait-and-judge framework outlined in Algorithm~\ref{alg:wait} satisfies~\eqref{eq:tube} with user-defined accuracy, while requiring, in the worst case, no more scenarios than the bound given by Theorem~\ref{th:classic_scenario}. We now revisit Example~\ref{ex:1} to illustrate that the wait-and-judge approach can indeed yield tighter bounds in practice.

\begin{figure}
    \centering
\input{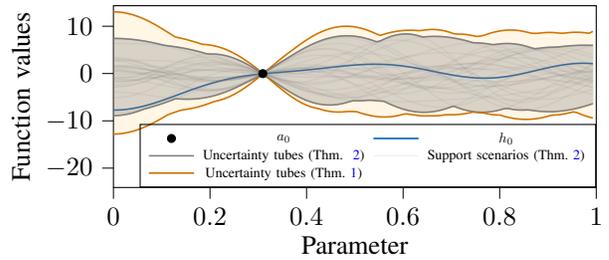}
    \caption{\emph{Uncertainty tubes established in Alg.~\ref{alg:wait}/Thm.~\ref{th:uncertainty_tubes}.}
    The uncertainty tubes following the wait-and-judge framework (gray) are tighter than the classic scenario approach (orange).
    }
        \label{fig:example_continued}
\end{figure}

\vspace*{0.1cm}
\setcounter{example}{0}
\begin{example}[revisited]
Executing Algorithm~\ref{alg:wait} for the setting in Example~\ref{ex:1} yields~$m_t=\num{596}$ scenarios with~$s_t=\num{31}$ support scenarios, resulting in sufficiently accurate uncertainty tubes; see Figure~\ref{fig:example_continued}.
This contrasts sharply with the $m_t=\num{21403}$ scenarios required when following the classical scenario approach according to Theorem~\ref{th:classic_scenario}.
The wait-and-judge framework not only decreases the required number of scenarios but it also yields tighter uncertainty bounds for this particular experiment.
\end{example}
\vspace*{0.1cm}

% We next remark on a straightforward initialization and updating routine for~$m_t$ (\lline~1 and \lline~10, respectively).
\vspace*{0.1cm}
\begin{remark}[Initializing and updating Algorithm~\ref{alg:wait}]\label{re:arithmetic} 
By setting~$s_t=1$ and~$\tau=1-\nu$, we can solve the fixed-point equation (\lline~6) for~$m_t$ to obtain a reasonable initial value for~$m_t$. 
After solving the scenario program with this~$m_t$ and counting the resulting support scenarios~$s_t$, we resolve the fixed-point equation for~$m_t$ (\lline~6) with this observed~$s_t$ and~$\tau=1-\nu$ to obtain an updated value of~$m_t$ (\lline~10).
We continue with this procedure until the accuracy condition is fulfilled (\lline~7).
For a more detailed discussion on the selection of~$m_t$, we refer to~\cite{garatti2022complexity}.
\end{remark}

% \section{Warm-up: Scalar uncertainty quantification}\label{sec:WU}

%\input{Sections/scalar}

\section{Applications to safe Bayesian optimization}\label{sec:BO}

After deriving the uncertainty tubes~$\ell_t, u_t$, we finally integrate them into a safe BO algorithm to solve Problem~\eqref{eq:opt} efficiently and effectively.
Our safe BO algorithm and the associated set definitions are inspired by \safeopt~\cite{sui2015safe, berkenkamp2023bayesian}.
% Let us get the uncertainty tubes without contained set

The forthcoming safety guarantees rely on only sampling within a subset of the domain~$\domain$, namely the safe set~$S_t$, that contains parameters that are safe with high probability.
Specifically, we define the safe set~$S_t$ as
\begin{equation}\label{eq:safe}
    S_t \coloneqq %S_{t-1} %\cup \cap_{i\in\Ig} 
   % \cup_{a^\prime\in S_{t-1}} 
    %
    %\left(
    \cap_{i\in\Ig} 
    \{a \in \domain \mid \ell_{i,t}(a) \geq \underline h_i
    \},
    %\right),
\end{equation}
\ie $S_t$ contains all parameters~$a\in\domain$ whose lower bound~$\ell_{i,t}$ is greater than the safety threshold~$\underline h_i$ for all~$i\in\I$.
To initialize the optimization procedure, we assume that we have access to a non-empty initial safe set~$S_0\subseteq\domain$ with~$h_i(a)\geq\underline h_i$ for all~$a\in S_0, i\in\Ig$.
In addition to ensuring safety, we aim to efficiently balance exploration and exploitation while solving~\eqref{eq:opt}.
Hence, we introduce the set of potential maximizers $M_t \coloneqq \{
    a\in S_t \mid u_{t,0}(a) \geq \max_{a^\prime\in S_t} \ell_{t,0}(a^\prime)
    \},$
and the set of potential expanders
$G_t(a)\coloneqq \{
a\in S_t\setminus M_t \mid g_t(a) > 0
\}$, with
$g_t(a)\coloneqq \lvert \{
a^\prime\in\domain\setminus S_t \! \mid \! a\in\argmin_{\widetilde a\in S_t} \|\widetilde a-a^\prime\|
\}\rvert$.
The set of potential maximizers~$M_t\subseteq S_t$ contains parameters that are likely to maximize the reward function, while evaluating within~$G_t$ may lead to an expansion of~$S_t$.
To reduce uncertainty in the candidate subset~$M_t\cup G_t$, we choose to evaluate the most uncertain parameter therein, \ie
\begin{align}\label{eq:acquisition}
    a_{t+1}=\argmax_{a\in M_t \cup G_t} \max_{i\in\I} \left(u_{i,t}(a)-\ell_{i,t}(a)\right).
\end{align}

\begin{algorithm}
\begin{algorithmic}[1]
    \Require $\kappa$, $\nu$, $\P_\mathrm{P}, \P_\epsilon$, $a_1\in S_0$, $y_1$, $T$
    \For{$t=1,\ldots,T$}
    \State $u_t,\ell_t \gets$ Algorithm~\ref{alg:wait}%($t,\kappa, \nu, \P_\mathrm{P},\P_\epsilon, a_{1:t}, y_{1:t}$)
% \If{$t>1$} $S_t\gets$~\eqref{eq:safe} \textbf{else} $S_t\gets S_0$
% \EndIf
\State Compute safe set~$S_t$ \Comment{\eqref{eq:safe}}
    \State Compute candidate sets~$M_t, G_t$ %\Comment{\eqref{eq:maximizer}, \eqref{eq:expander}}
\If{$\lvert M_t\cup G_t\rvert > 0$} $a_{t+1} \gets$ \eqref{eq:acquisition}  \textbf{else} break 
\EndIf
\State $y_{i,t+1}\gets h_i(a_{t+1}) + \epsilon_{i,t}$ \Comment{Conduct experiment}
    \EndFor
    \State \Return $\argmax_{a\in S_t}\ell_{0,t}(a)$ \Comment{Best safe parameter}
\end{algorithmic}
    \caption{Safe BO via function-based UQ}
    \label{alg:safe_BO}
\end{algorithm}

Algorithm~\ref{alg:safe_BO} summarizes our safe BO algorithm. 
First, we compute the uncertainty tubes~$u_t$ and~$\ell_t$ via the wait-and judge-scenario approach in Algorithm~\ref{alg:wait} (\lline~2).
Then, we compute the safe set~$S_t$ (\lline~3), the set of potential maximizers~$M_t$, and the set of potential expanders~$G_t$ (\lline~4).
We acquire the next parameter~$a_{t+1}$ from~\eqref{eq:acquisition} (\lline~5) and conduct an experiment (\lline~6).
The algorithm terminates if the maximum number of iterations~$T\in\mathbb N$ is reached (\lline~1) or if the candidate set~$M_t\cup G_t$ is empty (\lline~5).
Finally, we return the parameter we believe to safely maximize the reward function (\lline~7).

The following theorem proves that we satisfy the constraints of~\eqref{eq:opt} at each iteration with high probability while executing Algorithm~\ref{alg:safe_BO}. % with high probability.
 
\vspace*{0.1cm}
\begin{theorem}[Per-iteration safety]\label{th:safety}
    Let Assumptions~\ref{asm:noise} and~\ref{asm:prob} hold and a set of initial safe parameters~$\emptyset\neq S_0\subseteq \domain$ be given.
    Further, choose any probability parameter~$\nu\in (0,1)$, any confidence parameter~$\kappa\in (0,1)$ and define~$\kappa_t\coloneqq \nicefrac{6\kappa}{\pi^2 t^2}$.
    Then, at any~$t\geq 1$, with confidence at least~$1-\kappa_t$ under~$\P^{m_t}$, while executing Algorithm~\ref{alg:safe_BO}, 
        $h_i(a_t)  \geq \underline h_i,$
    holds with probability at least~$1-\nu$  under~$\P$ for all~$i\in\Ig$. 
\end{theorem}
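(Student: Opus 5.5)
The plan is to reduce per-iteration safety to the containment guarantee of Theorem~\ref{th:uncertainty_tubes}, combined with the definition of the safe set~\eqref{eq:safe} and the fact that Algorithm~\ref{alg:safe_BO} only evaluates parameters from the safe set. I will split into two cases.

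\emph{Case 1: the parameter comes from the initial safe set.} At the first iteration, $a_1\in S_0$ holds by assumption. Hence $h_i(a_1)\geq\underline h_i$ for all $i\in\Ig$ deterministically, and the claim holds trivially.

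\emph{Case 2: the parameter is selected by the algorithm.} For $t\geq 1$, the next parameter $a_{t+1}$ is chosen via~\eqref{eq:acquisition} from $M_t\cup G_t\subseteq S_t$. Both $M_t$ and $G_t$ are defined as subsets of $S_t$. The steps are as follows.
\begin{enumerate}[label=(\roman*)]
\item By~\eqref{eq:safe}, every $a\in S_t$ satisfies $\ell_{i,t}(a)\geq\underline h_i$ for all $i\in\Ig$.
\item By Theorem~\ref{th:uncertainty_tubes}, the tubes returned by Algorithm~\ref{alg:wait} (\lline~2) satisfy the following with $\P^{m_t}$-confidence at least $1-\kappa_t$:
\[
\P\big[\exists i\in\I,a\in\domain: h_i(a)\notin[\ell_{i,t}(a),u_{i,t}(a)]\big]\leq\nu .
\]
\item On this confidence event, the complementary event has $\P$-probability at least $1-\nu$. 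On that complementary event, $h_i(a)\geq \ell_{i,t}(a)$ holds for all $a\in\domain$ and $i\in\Ig\subseteq\I$.
\item In particular, at $a=a_{t+1}\in S_t$ this gives $h_i(a_{t+1})\geq\ell_{i,t}(a_{t+1})\geq\underline h_i$.
\end{enumerate}
Since $\Ig\subseteq\I$, one event covers all constraints simultaneously. No union bound over $i$ is therefore needed.

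The main obstacle is bookkeeping of indices and measures rather than any analytic step. The tubes are built at iteration $t$ from $\mathcal D_t$, but they are used to certify $a_{t+1}$. So I must state the result as: the parameter evaluated at step $t+1$ is certified by the tubes at step $t$, with confidence $1-\kappa_t$. Equivalently, after reindexing, $h_i(a_t)\geq\underline h_i$ is certified at confidence $1-\kappa_{t-1}$, or trivially when $t=1$.

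A second subtlety is that $a_{t+1}$ is itself a function of the sampled scenarios. This causes no problem, because the containment event in~\eqref{eq:tube} is uniform over all $a\in\domain$. Evaluating it at a data-dependent point is therefore legitimate, and the $\P$-probability bound is preserved.

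Finally, I would note that the algorithm may break before $T$ iterations. In that case no unsafe evaluation occurs at all, so those iterations satisfy the claim vacuously.
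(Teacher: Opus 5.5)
Your proposal is correct and follows the same route as the paper's proof. Both arguments restrict evaluations to $S_t$, use the definition~\eqref{eq:safe} to get $\ell_{i,t}(a)\geq\underline h_i$ there, and invoke the uniform containment of Theorem~\ref{th:uncertainty_tubes} with confidence $1-\kappa_t$. Your extra bookkeeping covers the trivial case $a_1\in S_0$, early termination, and the fact that the tubes built at iteration $t$ certify $a_{t+1}$ rather than $a_t$. The paper leaves all of this implicit, so your version makes explicit an off-by-one in how the statement is indexed.
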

\begin{proof}
    % The proof is analogous to the proof of~\cite[Lemma~4.2]{berkenkamp2023bayesian}.
    Algorithm~\ref{alg:safe_BO} ensures that we only sample within the safe set~$S_t$. %~\eqref{eq:safe}. 
%For all~$a\in S_t$, we \AT{either have that~$\ell_{i,t}(a)\geq\underline h_i$ for all~$i\in\I$ or that~$a\in S_{t-1}$}. 
By definition of~$S_t$, we have~$\ell_{i,t}(a)\geq\underline h_i$ for all~$a\in S_t, i\in\Ig$, \ie the lower bounds of the uncertainty tubes~$\ell_{i,t}$ obtained from Algorithm~\ref{alg:wait} are above the safety threshold.
The uncertainty tubes from Algorithm~\ref{alg:wait} are equipped with the guarantees from Theorem~\ref{th:uncertainty_tubes}.
From Theorem~\ref{th:uncertainty_tubes}, we get with that at any iteration~$t\geq 1$, with confidence at least~$1-\kappa_t$ that $\ell_{i,t}(a)\geq\underline h_i$ implies $h_{i,t}(a)\geq\underline h_i$ for all~$a\in\domain, i\in\I$ with probability~$1-\nu$.
\end{proof}

\begin{remark}[Nature of safety guarantees]
    Note that Theorem~\ref{th:safety} establishes \emph{per-iteration} safety, \ie the probabilistic guarantees only hold for each \emph{fixed} iteration~$t\geq 1$.
To recover the standard \safeopt\ guarantees~\cite{sui2015safe, berkenkamp2023bayesian}, the uncertainty tubes in Theorem~\ref{th:uncertainty_tubes} would need to hold \emph{simultaneously} for all~$t\geq 1$.
Establishing such a result requires further analysis, as the measure~$\P$ depends on the observations~$\mathcal D_t$ and thus varies across iterations.
Hence, a classical union bound as in \eg \cite[Theorem~1]{tokmak2026safe} is not straightforwardly applicable.
We here therefore employ a heuristic confidence tightening~$\kappa_t$ and demonstrate in the following section that safety is indeed maintained while executing Algorithm~\ref{alg:safe_BO}.
\end{remark}

\section{Experiments}\label{sec:exp}
We evaluate Algorithm~\ref{alg:safe_BO} on a synthetic example in Section~\ref{sec:toy} and demonstrate its successful end-to-end deployment by tuning the control parameters of a real Furuta pendulum in Section~\ref{sec:furuta}. 
We illustrate a more diverse function class along with the corresponding uncertainty tubes obtained via Algorithm~\ref{alg:wait} in Section~\ref{sec:admissible}. 
All experiments were conducted with~$\kappa=10^{-3}$,~$\nu=10^{-1}$,~$\domain=[0,1]$, $N=1000$,~$\lvert\I\rvert=1$, and on an Ubuntu laptop with 32 GB RAM and an Intel Core i7-12700H, unless stated otherwise.

\subsection{Safe BO with synthetic examples}\label{sec:toy}
We consider the same function~$h_0$ as outlined in Example~\ref{ex:1} and set the safety threshold to~$\underline h_0=-2.15$, which corresponds to the~$20\%$-quantile of~$h_0$.
We then solve Problem~\eqref{eq:opt} with Algorithm~\ref{alg:safe_BO}.
The result is illustrated in Figure~\ref{fig:safe_BO_synthetic}.
We initialize Algorithm~\ref{alg:safe_BO} with~$S_0=0.53$ (magenta diamond), sequentially conduct~$T=30$ experiments (black samples), and return the best safe parameter (cyan square) without violating the safety constraints (dashed red line). %, which is in accordance with Theorem~\ref{th:safety}.
This particularly showcases the effectiveness of the uncertainty tubes (gray lines) as outlined in Algorithm~\ref{alg:wait} and Theorem~\ref{th:uncertainty_tubes}, which contain~$h_0$ uniformly over the domain at both~$t=1$ (upper sub-figure) and~$t=30$ (lower sub-figure).

\begin{figure}
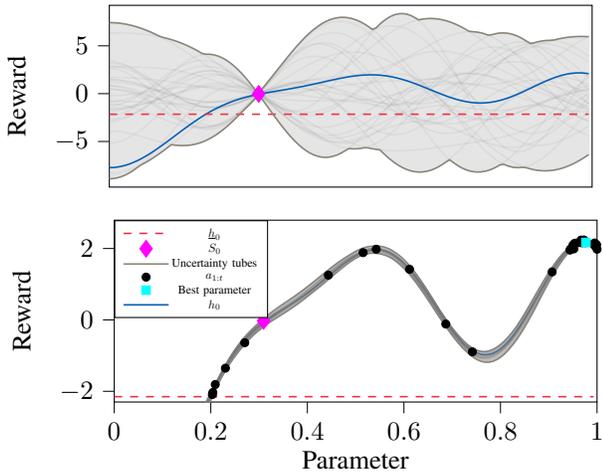

\begin{center}
\input{Figures/final/safeBO_trig_init} \newline
\phantom{123}
\input{Figures/final/safeBO_trig_end}
\end{center}
        \label{fig:synthetic_end}
    \caption{\emph{Synthetic safe BO example.} 
    We start (upper sub-figure) to execute Algorithm~\ref{alg:safe_BO} with one initial sample~$S_0$. 
    After $T=30$ iterations (lower sub-figure), Algorithm~\ref{alg:safe_BO} explored the domain, identified the global maximizer, and did not conduct an unsafe experiment. 
    }
    \label{fig:safe_BO_synthetic}
\end{figure}

\subsection{Hardware experiment}\label{sec:furuta}

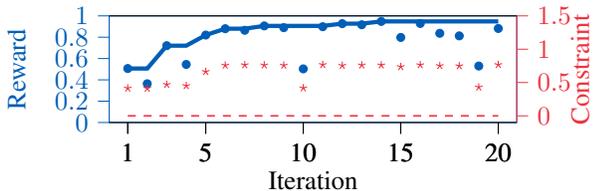
\begin{figure}
    \centering
    % This file was created with tikzplotlib v0.10.1.
\begin{tikzpicture}

\definecolor{darkgray176}{RGB}{176,176,176}

% \begin{axis}[
% tick align=outside,
% tick pos=left,
% x grid style={darkgray176},
% xlabel={Iteration},
% xmin=-0.95, xmax=19.95,
% xtick style={color=black},
% y grid style={darkgray176},
% ylabel=\textcolor{aaltoBlue}{Reward},
% ymin=0, ymax=1,
% height=3cm,
% width=6cm,
% ytick style={color=black}
% ]
\begin{axis}[
tick align=outside,
tick pos=left,
x grid style={darkgray176},
xlabel={Iteration},
xmin=-0.95, xmax=19.95,
xtick style={color=black},
y grid style={darkgray176},
ylabel=\textcolor{aaltoBlue}{{Reward}},
ymin=0, ymax=1,
ytick style={color=black},
height=3cm,
width=7cm,
  ytick style={color=aaltoBlue},
  yticklabel style={color=aaltoBlue},
  y label style={color=aaltoBlue},
  y axis line style={aaltoBlue},
  xtick={0,4,9,14,19},
xticklabels={1, 5,10,15,20},
]
\addplot [draw=aaltoBlue, fill=aaltoBlue, mark=*, only marks, mark size=1.5]
table {%
0 0.506385684013367
1 0.363650023937225
2 0.720545709133148
3 0.544653236865997
4 0.82013988494873
5 0.879775762557983
6 0.864942491054535
7 0.906046986579895
8 0.88959676027298
9 0.502953767776489
10 0.898508489131927
11 0.927131056785583
12 0.91692715883255
13 0.948114693164825
14 0.797857046127319
15 0.928743720054626
16 0.836684107780457
17 0.812491238117218
18 0.529780328273773
19 0.881866991519928
};
\addplot [ultra thick, aaltoBlue]
table {%
0 0.506385693565223
1 0.506385693565223
2 0.720545729466762
3 0.720545729466762
4 0.820139858079972
5 0.879775733289558
6 0.879775733289558
7 0.906046983327039
8 0.906046983327039
9 0.906046983327039
10 0.906046983327039
11 0.927131057241255
12 0.927131057241255
13 0.948114671928994
14 0.948114671928994
15 0.948114671928994
16 0.948114671928994
17 0.948114671928994
18 0.948114671928994
19 0.948114671928994
};
\end{axis}

\begin{axis}[
axis y line=right,
tick align=outside,
x grid style={darkgray176},
xmin=-0.95, xmax=19.95,
xtick pos=left,
xtick style={color=black},
y grid style={darkgray176},
ylabel=\textcolor{aaltoRed}{{Constraint}},
ymin=-0.1, ymax=1.5,
ytick pos=right,
ytick style={color=black},
yticklabel style={anchor=west},
height=3cm,
width=7cm,
  ytick style={color=aaltoRed},
  yticklabel style={color=aaltoRed},
  y label style={color=aaltoRed, yshift=2ex},
  y axis line style = {-, aaltoRed},
    xtick={0,4,9,14,19},
xticklabels={1, 5,10,15,20}
  ]

\addplot [draw=aaltoRed, fill=aaltoRed, mark=star, only marks, mark size=1.5]
table {%
0 0.414174824953079
1 0.4111068546772
2 0.469398111104965
3 0.450990349054337
4 0.662679672241211
5 0.757786512374878
6 0.763922452926636
7 0.760854482650757
8 0.757786512374878
9 0.417242765426636
10 0.766990423202515
11 0.754718542098999
12 0.760854482650757
13 0.760854482650757
14 0.73631078004837
15 0.763922452926636
16 0.75165057182312
17 0.748582601547241
18 0.429514616727829
19 0.766990423202515
};
\addplot [semithick, aaltoRed, dashed]
table {%
0 0
1 0
2 0
3 0
4 0
5 0
6 0
7 0
8 0
9 0
10 0
11 0
12 0
13 0
14 0
15 0
16 0
17 0
18 0
19 0
};
\end{axis}

\end{tikzpicture}
    \caption{\emph{Reward (scaled) and constraint development for the hardware experiment.}
The blue points denote the observed reward, while the blue line shows the running maximum.
The red stars show the minimum of both constraint values at each iteration, which always remain above the safety threshold (dashed red line).
}
    \label{fig:reward_development_twin}
\end{figure}

\begin{figure}
    \centering
    \input{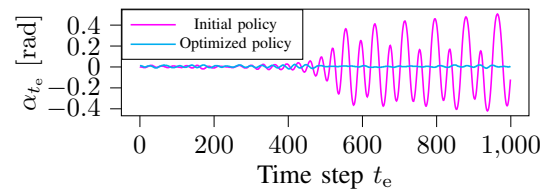}
    \caption{\emph{Trajectory of the pendulum angle~$\alpha_\te$.} 
    % The upright position of the pendulum corresponds to~$\alpha_\te=0$.
    The optimized policy oscillates significantly less around the upright position of the pendulum ($\alpha_\te=0$) than the initial policy.
    }
    \label{fig:angles}
\end{figure}

Next, we validate our safe BO algorithm (Algorithm~\ref{alg:safe_BO}) with its corresponding uncertainty tubes (Algorithm~\ref{alg:wait}) in a hardware experiment. % experiment on a real Furuta pendulum~\cite{furuta1992swing}.
Specifically, we safely tune control parameters of a Furuta pendulum.
We consider a linear state-feedback controller of the form~$u_\te=K_t x_\te$, with~$x_\te = [\theta_\te,\alpha_\te,\dot{\theta}_\te,\dot{\alpha}_\te]^\top$, where~$\te\in [\Te]$ denotes the discrete time index within one episode,\footnote{%
We distinguish between two time indices; the index~$\te$ denotes the discrete time step within a single episode (\ie within one roll-out of the Furuta pendulum), whereas~$t$ denotes the iteration index for safe BO, with each~$t$ corresponding to one completed episode.
} $\Te\in\mathbb N$ denotes the episode horizon, and $\theta_\te,\alpha_\te$ are the rotary arm and pendulum angles, respectively.
We follow the setup described in~\cite{baumann2021gosafe}.
In particular, we tune the first two entries of the feedback vector~$K_t$ and solve~\eqref{eq:opt} with the reward function~$h_0$ promoting stabilization around the upright position and the constraint functions~$h_1,h_2$ enforcing angle limits along the trajectory, 
\begin{align*}\label{eq:hardware_functions}
h_0(a_t)&=\frac{1}{T_\mathrm{e}}\sum_{t_\mathrm{e}=1}^{T_\mathrm{e}}
\max
\left\{
1-\frac{0.8\lvert \alpha_{t_\mathrm{e}}\rvert + 0.2\lvert\theta_{t_\mathrm{e}}\rvert}{\pi}
, 0
\right\}^2 \\
h_1(a_t)&= \max_{\te\in [\Te]}\{
\lvert \theta_\te 
\rvert
\} -\frac{\pi}{2},
\quad
h_2(a_t)= \max_{\te\in [\Te]}\{
\lvert \alpha_\te 
\rvert
\} -\frac{\pi}{4}.
\end{align*}

We discretize the domain~$\domain=[0,1]^2$ into~$N=\num{10201}$ equidistant points, set the episode horizon to~$\Te=\num{1000}$, and assume that the inherent measurement noise (Assumption~\ref{asm:noise}) is uniformly distributed, \ie $\epsilon_{i,t}\sim \mathcal U(-\delta,\delta), \delta=5\times 10^{-2}$.
For the reward and constraint functions~$h_i$, $i\in\I={0,1,2}$, we assume that they admit a representation as in Assumption~\ref{asm:prob} with basis functions~$\varphi_{i,r}(a)=k_{\mathrm{Ma}32}(a,a_r)$, where~$k_{\mathrm{Ma}32}$ denotes the Matérn32 kernel with lengthscale $10^{-2}$, and we assume Gaussian coefficients~$c_\mathrm{P}\sim \mathcal N(0,0.1).$
Moreover, we initialize our safe BO algorithm (Algorithm~\ref{alg:safe_BO}) with an initial safe set~$S_0=[0.23,0.40]^\top$ and optimize for a total of~$T=20$ iterations. 
The exploration behavior is illustrated in Figure~\ref{fig:Furuta_intro}.
Starting from an initial parameter (magenta), the safe set (green hull) is progressively expanded by evaluating different parameters (black points), ultimately identifying the best parameter (cyan).
As depicted in Figure~\ref{fig:reward_development_twin}, the reward increases without conducting unsafe experiments.
Figure~\ref{fig:angles} demonstrates the reward increase, where the pendulum angle~$\alpha_{t_e}$ exhibits significantly reduced oscillations around the upright position under the optimized policy (cyan) compared to the initial policy (magenta).

\subsection{Function-based UQ for more diverse functions}\label{sec:admissible}
Next, we demonstrate the flexibility of function-based UQ, as formalized in Section~\ref{sec:function-based-UQ}, and the reliability of our proposed uncertainty tubes (Algorithm~\ref{alg:wait}, Theorem~\ref{th:uncertainty_tubes}) to a discontinuous function. %, 
We thereby show that the potential applications extend far beyond well-behaved functions and safe control parameter tuning.
We consider a heavy-tailed coefficient distribution~$c_\mathrm{P}\sim 10^{-2}\cdot \mathcal T_{10}$, where~$\mathcal T_{10}$ denotes a Student's-$t$ distribution with ten degrees of freedom, and choose the basis functions~$\varphi_{0,r}(a)$ as the Haar wavelet basis on the domain~$\domain$~\cite{mallat1999wavelet}.
The observation noise is set to~$\epsilon_{0,t}\sim \mathcal N(0,10^{-2}).$
Figure~\ref{fig:wavelet} illustrates the uncertainty tubes obtained with Algorithm~\ref{alg:wait} after one function evaluation at~$a_{1:t}=10^{-1}$ (left) and after fifteen equidistant evaluations~$a_{1:t}\in[0,10^{-2}]$ (right), covering~$h_0$ and tightening around the sampled points.

\begin{figure}
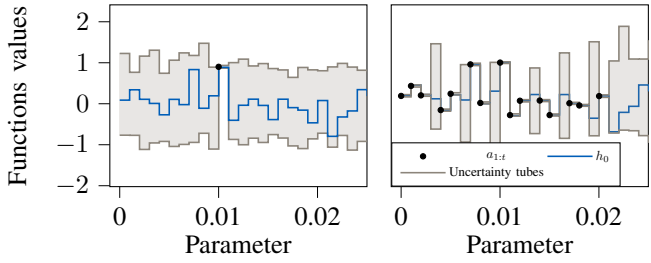

    \centering
    \input{Figures/final/Haar_wavelet_beginning}
    \input{Figures/final/Haar_wavelet_end}
    \caption{\emph{Haar wavelet basis.}
    Note that the~$x$-axis is zoomed in to illustrate the effect of the samples on the uncertainty tubes, and the support constraints are omitted for clarity.
}
    \label{fig:wavelet}
\end{figure}

\section{Concluding remarks}\label{sec:conclusion}
In this paper, we introduced function-based UQ with applications to safe learning-based control.
Function-based UQ models the unknown function as a random function that admits a linear expansion in user-chosen basis functions with random coefficients, from which i.i.d.\ realizations can be sampled~\ref{co:asm}.
These realizations give rise to random functions, which we refer to as \emph{scenarios}, and enable the construction of uncertainty tubes around the unknown function with high probability via the scenario approach and the wait-and-judge framework~\ref{co:bounds}.
Our uncertainty tubes do not require Lipschitz constants or functional norm bounds and accommodate a wide range of functions, including discontinuities.
We integrated these uncertainty tubes into a safe BO algorithm with which we safely tuned control parameters of a real Furuta pendulum~\ref{co:exp}.
Future work includes deriving uncertainty tubes that hold simultaneously for all iterations, as well as extending function-based UQ beyond safe BO and analyzing its robustness to model misspecification.

\section*{Acknowledgments}
AT thanks Simone Garatti for valuable discussions. % and insightful feedback.

\bibliographystyle{IEEEtran}
\bibliography{IEEEabrv,mybibfile}

\end{document}